\newcommand{\be}{\begin{equation}}
\newcommand{\ee}{\end{equation}}
\newcommand{\bea}{\begin{eqnarray}}
\newcommand{\eea}{\end{eqnarray}}
\newcommand{\Id}{\mathbb{1}}
\newcommand{\nsamp}{\mathcal{N}}  % NUMBER OF SAMPLES 
\newcommand{\Trace}{\text{Tr}}
\newcommand{\StabGr}{G_{\mathcal{S}}}
\newcommand{\StabDim}{{k}}
\newcommand{\StabNull}{{\nu}}
\newcommand{\Hgate}{{\text{H}}}
\newcommand{\Sgate}{{\text{S}}}
\newcommand{\CNOT}{{\text{CNOT}}}
\newcommand{\Tgate}{{\text{T}}}
\newcommand{\Mod}[1]{\ (\mathrm{mod}\ #1)}
\newtheorem{lemma}{Lemma}
\date{ \today} 
\begin{document}

\title{Learning the stabilizer group of a Matrix Product State}
\author{Guglielmo Lami}
\affiliation{International School for Advanced Studies (SISSA), 34136 Trieste, Italy}
\affiliation{Laboratoire de Physique Théorique et Modélisation, CY Cergy Paris Université, CNRS, F-95302 Cergy-Pontoise, France}
\author{Mario Collura}
\affiliation{International School for Advanced Studies (SISSA), 34136 Trieste, Italy}
\affiliation{INFN Sezione di Trieste, 34136 Trieste, Italy}

\begin{abstract}
We present a novel classical algorithm designed to learn the stabilizer group -- namely the group of Pauli strings for which a state is a $\pm 1$ eigenvector -- of a given Matrix Product State (MPS). The algorithm is based on a clever and theoretically grounded biased sampling in the Pauli (or Bell) basis. Its output is a set of independent stabilizer generators whose total number is directly associated with the stabilizer nullity, notably a well-established nonstabilizer monotone.
We benchmark our method on $T$-doped states randomly scrambled via Clifford unitary dynamics, demonstrating very accurate estimates up to highly-entangled MPS with bond dimension $\chi\sim 10^3$. Our method, thanks to a very favourable scaling $\mathcal{O}(\chi^3)$, represents the first effective approach to obtain a genuine magic monotone for MPS, enabling systematic investigations of quantum many-body physics out-of-equilibrium.
\end{abstract}

\maketitle

\paragraph{Introduction. --} 
Quantum states of many interacting particles (or qubits) 
have in general a very high degree of complexity, due to exponential vastness of the Hilbert space~\cite{Feynman1982,RevModPhys.71.1253}. Achieving a full comprehension of which states can be simulated with reasonable classical computational resources, i.e.\ polynomial in the number $N$ of particles, is a task of invaluable importance since, for instance, it can shed light on areas where a quantum computational advantage might be realized~\cite{Feynman1982,shor,kitaev2002classical,Nielsen_chuang_2010}. There are at least two known general classes of states falling into this scenario.

First, states with sufficiently low amount of quantum correlations between the constituencies, i.e.\ low entanglement, can be simulated by means of Tensor Networks (TN) ~\cite{Silvi_2019,Schollwock_2011,biamonte2020lectures}. TN, among which Matrix Product States (MPS) play a primary role, represent the quantum wave function through a network of tensors whose indices correspond to both physical variables (such as spin), and auxiliary fictitious variables.
The latter are always contracted (summed over) and their role is to encode entanglement. In 1D, MPS can be used to simulate any state whose entanglement is bounded by a constant which does not scale extensively with the size $N$~\cite{Vidal_2004,Hastings_2007,Cirac_2021}.

Besides, there exists another general framework for simulating certain quantum states, which is linked to the specific structure of the Clifford group~\cite{Nielsen_chuang_2010,Gottesman_1997,Gottesman_1998_1,Gottesman_1998_2,Aaronson_2004,Dehaene_2003}. This group consists of unitary transformations mapping the group of strings of Pauli matrices over $N$ qubits (Pauli group) to itself under conjugation. When considering any state $\ket{\psi}$, it is possible to identify an abelian subgroup of the Pauli group, referred to as stabilizer group and denoted as $\StabGr(\ket{\psi})$. This is generated by $\StabDim_{\psi}$ mutually commuting Pauli strings $\pmb{\sigma}$ which stabilize the state, i.e.\ such that $\pmb{\sigma}\ket{\psi} = \pm \ket{\psi}$. States for which $\StabDim_{\psi}=N$ are dubbed stabilizers, and are uniquely identified by the list of Pauli generators~\cite{Nielsen_chuang_2010}. Stabilizer can be equivalently characterized as those states obtained by applying Clifford transformations to the computational basis state $\ket{0}^{\otimes N}$, and accordingly can encode arbitrary amounts of entanglement. Nevertheless, they can be simulated classically using the stabilizer formalism. Indeed, since $2$ classical bits can encode the $4$ Pauli matrices, one can conveniently store the generators of $\StabGr(\ket{\psi})$ in a tableau of size $N \times 2N$. Operations such as evaluating expectation values of Pauli operators, applying Clifford unitaries or measurements of Pauli operators, can be performed efficiently by updating the tableau at cost $\mathcal{O}(N^2)$~\cite{Aaronson_2004}. Nevertheless, to cover the entire Hilbert space (i.e.\ to achieve quantum computational universality), non-Clifford unitary transformations are needed and typically the complexity of the tableau algorithm scales exponentially with the number of these. The amount of non-Clifford resources needed to prepare a state is usually dubbed as nonstabilizerness and it is known as one of the veritable resources of the quantum realm~\cite{Winter_2022, Howard_2017,Leone_2022}.

Interestingly, until now only a few works have explored the interconnections between these two approaches. Recently, Stabilizer Renyi Entropies (SRE) gained significant interest as potential candidates for quantifying nonstabilizerness~\cite{Oliviero_2022_1, Leone_2022, Tirrito_2023, rattacaso2023stabilizer}, with the advantage of being amenable to experimental measurements ~\cite{Oliviero_2022_2, Gullans_2023}. Ways of evaluating SRE for MPS have been discussed in Ref.~\cite{Lami_2023, Haug_2023_1, Haug_2023_2, Tarabunga_2023_1}. However, as shown in Ref.~\cite{Haug_2023_2}, SRE are not genuine magic monotones (for qubits systems~\cite{Tarabunga_2023_2}) and thus the question of how to extract a veritable measure of magic from an MPS remains open. In this letter, we consider the task of learning the stabilizer group $\StabGr(\ket{\psi})$ of a state $\ket{\psi}$ given as MPS. The goal is achieved by employing a new type of sampling in the Pauli basis that is intentionally biased to favor the extraction strings belonging to the stabilizer group. To ensure a comprehensive mapping of the entire group $\StabGr(\ket{\psi})$, the sampling can be iterated over modified states obtained with Clifford transformations from the original MPS,
the computational cost for a single iteration being $\mathcal{O}(N\chi^3)$, where $\chi$ is the bond of the MPS. The final output of our Algorithm is a set stabilizer generators and an estimation of the stabilizer dimension of the state. Benchmarks show that this estimate is consistently accurate, i.e.\ one has a high probability of correctly learning all the generators of $\StabGr(\ket{\psi})$.

The task we consider is of crucial importance, in fact our novel Algorithm is the first known method to obtain a genuine magic monotone for MPS with reasonable computational resources. Consequently, it can pave the way to a systematic numerical investigation of the nonstabilizerness of quantum many-body states. In addition, one could in principle exploit the knowledge of the stabilizer group of the MPS to reduce its computational complexity. This could result into hybrid MPS stabilizer techniques, offering novel powerful means of classically simulating quantum systems. 
For instance, any state $\ket{\psi}$ with stabilizer dimension $\StabDim_{\psi}$ can be partially disentangled with a suitable Clifford unitary $U_{\mathcal{C}}$, obtaining $U_{\mathcal{C}} \ket{\psi} = \ket{0}^{\otimes \StabDim_{\psi}} \ket{\tilde{\psi}}$.~\cite{Nielsen_chuang_2010}\\

\paragraph{Preliminaries. --}
Let us consider a quantum system consisting of $N$ qubits.
We identify the Pauli matrices by $\{\sigma^{\alpha}\}_{\alpha=0}^{3}$, 
with $\sigma^{0} = \mathbb{1}$, and with $\pmb{\sigma} = \prod_{j=1}^{N}\sigma_{j} \in \mathcal{P}_{N}$ a generic $N-$qubits Pauli strings where $\mathcal{P}_{N} = \{ \sigma^0, \sigma^1, \sigma^2, \sigma^3 \}^{\otimes N}$. 
For a pure normalised state $\rho = \ket{\psi} \langle \psi|$, 
we can define the stabilizer group $\StabGr(\ket{\psi})$ as the set of Pauli strings in $\mathcal{P}_{N}$ for which $\ket{\psi}$ is an eigenstate, i.e.\ $\StabGr(\ket{\psi}) = \{ \pmb{\sigma} \text{ s.t. } \pmb{\sigma} \ket{\psi} = \pm \ket{\psi} \}$. $\StabGr$ is an abelian subgroup of the $N$ qubits Pauli group~\cite{Nielsen_chuang_2010}. \\
The stabilizer dimension $\StabDim_{\psi}$ of $\ket{\psi}$ is the number of independent (commuting) Pauli strings generating $\StabGr$. Equivalently, one can define $\StabDim_{\psi} \equiv \log_2 |\StabGr(\ket{\psi})|$, where $|\cdot|$ represents the cardinality of the set. The stabilizer nullity is defined as $\StabNull_{\psi}=N-\StabDim_{\psi}$ and is a genuine magic monotone since it is non-increasing under any stabilizer operations, such as Clifford unitaries or measurements of Pauli operators~\cite{Beverland_2020}. \\
By definition, $\StabDim_{\psi}=N$ for stabilizer states, whereas $\StabDim_{\psi} \geq N - t$ for $t-$doped stabilizer state (see Lemma~\ref{lemma:stabdim} in Supp.Mat.)~\cite{Grewal_2023}. These are states obtained from the computational basis state $\ket{0}^{\otimes N}$ through the application of a circuit consisting of Clifford gates and at most $t$ single-qubit non-Clifford $\Tgate$ gates ($t=0$ for stabilizers). The $\Tgate$ gate is defined as $\Tgate = \text{diag}(1, e^{i \pi/4})$, which together with Clifford gates $\{ \Hgate, \Sgate, \CNOT \}$ form an universal set of gates. Due to what has been mentioned, the number $t$ of $\Tgate$-gates required to synthesize a state $\ket{\psi}$ is lower bounded by $\StabNull_{\psi}$~\cite{Jiang_2023}. \\
In Ref.~\cite{Montanaro_2017}, Montanaro introduced a learning procedure for stabilizer states which is based on Bell sampling, i.e.\ joint measurements on two copies of the state in the Bell basis. Such measurements correspond to sampling Pauli strings $\pmb{\sigma} \in \mathcal{P}_{N}$ with a probability defined as $\Pi_{\rho}(\pmb{\sigma}) = \frac{1}{2^N} \Trace[\rho \pmb{\sigma}]^{2}$, and we will also refer to it as Pauli sampling. Montanaro showed that for stabilizer states the generators of $\StabGr(\ket{\psi})$ can be learned with only $\mathcal{O}(N)$ samples from $\Pi_{\rho}$~\cite{Montanaro_2017}, resulting in an exponential speedup for stabilizer states compared to tomographic methods~\cite{Anshu_2023}. The question of whether and how Montanaro's method can be expanded to learn $t-$doped states is a significant subject of research~\cite{Leone_2023, Grewal_2023,Hangleiter_2023}. \\
In Ref.~\cite{Leone_2023} the authors employ in turn the Bell sampling and notice that the extraction of $\StabDim_{\psi} + N$ random Pauli operators from $\StabGr(\ket{\psi})$ is sufficient to determine a generator set with failure probability of at most $2^{-N}$. However, for a $t-$dopes state the probability of obtaining a string $\pmb{\sigma} \in \StabGr(\ket{\psi})$ is $p(\pmb{\sigma} \in \StabGr) = |\StabGr(\ket{\psi})| \cdot \frac{1}{2^N} = 2^{\StabDim_{\psi} - N} \geq 2^{-t}$, since $\Pi_{\rho}(\pmb{\sigma}) = 1/2^N$ for any stabilizer string. This result shows that, in general, the probability of successfully finding a stabilizer string decreases exponentially with $t$. Hence, the approach in Ref.~\cite{Leone_2023} is feasible only when $t = \mathcal{O}(\log N)$. \\

\paragraph{MPS stabilizer sampling. --}
We consider a pure state $\ket{\psi}$ represented in the MPS form~\cite{Schollwock_2011,Silvi_2019,biamonte2020lectures} $\ket{\psi} = \sum_{s_1,s_2,\dots,s_N} \mathbb{A}^{s_1}_{1}\mathbb{A}^{s_2}_{2}\cdots\mathbb{A}^{s_N}_{N} |s_1,s_2,\dots,s_N \rangle$, with $\mathbb{A}^{s_{j}}_{j}$ being $\chi_{j-1}\times\chi_{j}$ matrices, except at the left (right) boundary where $\mathbb{A}^{s_{1}}_{1}$ ($\mathbb{A}^{s_{N}}_{N}$) is a $1\times\chi_1$ ($\chi_{N-1}\times1$) row (column) vector. Here $|s_{j}\rangle \in \{|0\rangle, |1\rangle\}$ is the local computational basis. Without loss of generality, the state is assumed right-normalised, namely $\sum_{s_{j}} \mathbb{A}^{s_{j}}_{j} (\mathbb{A}^{s_{j}}_{j})^{\dag} = \Id$. \\
As shown for the first time in Ref.~\cite{Lami_2023}, given the MPS $\ket{\psi}$ one can efficiently achieve a perfect Pauli sampling from the probability distribution $\Pi_{\rho}(\pmb{\sigma})$ at computational cost $\mathcal{O}(N \chi^3)$, with $\chi = \max_i \chi_i$. Such a sampling is achieved by exploiting the decomposition
\begin{equation}\label{eq:chain_prob}
\Pi_{\rho}(\pmb{\sigma}) = \pi_{\rho}(\sigma_1) \pi_{\rho}(\sigma_2|\sigma_1) ... \, \pi_{\rho}(\sigma_N|\sigma_1\cdots\sigma_{N-1}) \, ,
\end{equation}
where $\pi_{\rho}(\sigma_i|\sigma_{1}\cdots \sigma_{i-1}) = \pi_{\rho}(\sigma_1\cdots \sigma_i) / \pi_{\rho}(\sigma_1\cdots \sigma_{i-1})$ is the probability of Pauli matrix $\sigma_{i}$ appearing at position $i$ given that $\sigma_1 ... \, \sigma_{i-1}$ have been extracted at positions $1 ... \, i-1$, regardless of the occurrences in the remaining part of the system (i.e.\ marginalizing over Pauli sub-strings on qubits $i+1 \dots N$). Perfect sampling operates going through qubits $i=1,2... \, N$ in a sweep and sampling each local Pauli matrix according to the conditional probabilities $\pi_{\rho}(\sigma_i|\sigma_{1}\cdots \sigma_{i-1})$
~\cite{Stoudenmire_2010,Ferris_2012,Lami_2023}.

Here, we aim to identify the stabilizer group $\StabGr(\ket{\psi})$ of the MPS. In principle, one could perform a perfect sampling of Pauli strings $\pmb{\sigma}$ with the hope of sampling stabilizer strings. However, has outlined before, the probability of sampling a stabilizer string decreases exponentially as $\StabDim_{\psi}$ decrease. Nevertheless, in an MPS simulation a perfect sampling is not the sole option, and one can alternatively attempt to bias the extraction of stabilizer strings. 

With this spirit, we introduce a novel sampling strategy. In this approach, at a generic step $i$ of the sweep, a certain number $K$ of sub-strings $\{ \pmb{\sigma}_{[1,i]}^{\mu} \}_{\mu=1}^K$ are stored (here and in the following we use $\pmb{\sigma}_{[1,i]}$ as short-form for $(\sigma_{1} ... \, \sigma_{i})$). We also store the list of corresponding partial probabilities $\{\pi_{\rho}(\pmb{\sigma}_{[1,i]}^{\mu})\}_{\mu=1}^K$, where 
\begin{equation}
 \pi_{\rho}(\sigma_1 \cdots \sigma_i) = \sum_{\pmb\sigma \in \mathcal{P}_{N-i}} \frac{1}{2^N} \Trace[\rho \, \sigma_{1}\cdots\sigma_{i}\pmb{\sigma}]^{2} \, .
\end{equation}
Ideally, one would like to keep track of all possible sub-strings, thereby enabling the identification of those that meet the stabilizer condition $\Pi_{\rho}(\pmb{\sigma}^{\mu})=\pi_{\rho}(\pmb{\sigma}_{[1,N]}^{\mu})=1/2^N$ at $i=N$. 
Yet, this is feasible only until the number of stored sub-strings, which is $4^i$, remains under control.
In practice, one has to find effective ways of discarding certain sub-strings to ensure that their total number remains within a predefined maximum number $\nsamp$. Naturally, the goal is to keep in memory only those sub-strings that have a higher likelihood of resulting into stabilizer strings at the end of the sweep. To this purpose, we adopt the following two strategies:
\begin{itemize}
\item[$i)$] 
We notice that for any stabilizer string $\pmb{\sigma} \in \StabGr(\ket{\psi})$ the partial probability at site $i$ is lower bounded by $1/(2^{i} \chi_i)$, i.e.\ $\pi_{\rho}(\pmb{\sigma}_{[1,i]}) \geq 1/(2^{i} \chi_i)$ (see Lemma~\ref{lemma:partialstabprob} in Supp. Mat.). Accordingly, one can discard all stored sub-strings for which $2^{i} \chi_i \pi_{\rho}(\pmb{\sigma}_{[1,i]}) <  1$. 
\item[$ii)$] 
When $K$ exceed $\nsamp$, one can simply sort the probabilities $\pi_{\rho}(\pmb{\sigma}_{[1,i]}^{\mu})$ in descending order and select the sub-strings corresponding to the highest $\nsamp$ values~\cite{chertkov2022optimization}. Indeed, these are the sub-strings with the highest likelihood to maximize the final probability $\Pi_{\rho}(\pmb{\sigma})$ at the end of the sweep.
\end{itemize}
Previous points establish a straightforward method for conducting a sampling process with an enhanced ability to generate Pauli stabilizer strings. At a generic step $i$, one has to compute the conditional probabilities $\pi(\alpha | \mu)=\pi_{\rho}(\sigma^{\alpha}|\pmb{\sigma}_{[1,i-1]}^{\mu})$ ($\alpha \in \{0,1,2,3\}$, $\mu \in \{0,1,... \, K\}$) and, if their total number $4K$ exceed $\nsamp$, rules $i)$ and $ii)$ are applied to select $\nsamp$ optimal sub-strings. These have indices $(\alpha_{\filledstar}, \mu_{\filledstar})$. All other possible choices are discarded. \\
The $\pi(\alpha | \mu)$ are obtained through the tensor contraction: $\pi(\alpha | \mu) = \frac{1}{2} \, \sum_{s',s, r', r} (\sigma^{\alpha})_{s' s} (\sigma^{\alpha,*})_{r' r} \Trace[ (\mathbb{A}^{s'}_{i})^{\dag} \mathbb{L}^{\mu} \mathbb{A}^{s}_{i}  (\mathbb{A}^{r}_{i})^{\dag} \cdot$ $\cdot (\mathbb{L}^{\mu})^{\dag} (\mathbb{A}^{r'}_{i})]$ (see Fig.~\ref{fig:1} $a)$), using a set of environment matrices $\mathbb{L}^{\mu}$ (as in Ref.~\cite{Lami_2023}). These serve to encode information regarding samples collected from previously visited sites, and are updated after the selection of optimal sub-strings as: $\mathbb{L}^{\mu}  \rightarrow 1/\big(2 \pi_{\rho}(\sigma^{\alpha_{\filledstar}}|\pmb{\sigma}^{\mu_{\filledstar}}_{[1,i-1]} )\big)^{1/2} \cdot  \sum_{s',s} (\sigma^{\alpha_{\filledstar}})_{s' s} (\mathbb{A}^{s'}_{i})^{\dag} \mathbb{L}^{\mu_{\filledstar}} \mathbb{A}^{s}_{i}$ (see Fig.~\ref{fig:1} $b)$). The prefactor ensures a correct normalization, i.e.\ $\Trace[\mathbb{L}^{\mu}(\mathbb{L}^{\mu})^{\dag}] =1$ at each step. 
Initially, $K$ is set to $1$ and $\mathbb{L}^{\mu}=(1)$.

We summarize the full stabilizer sampling recipe in Algorithm~\ref{alg:stabMPS}, graphically supported by Fig.~\ref{fig:1}. Here, we represent $\mathbb{L}$ has a tensor with three indices: two indices for the auxiliary MPS space, plus $\mu$. The Pauli tensor $\sigma^{\alpha}_{s' s} = \braket{s'|\sigma^{\alpha}|s}$ has also three indices. Indices $\alpha, \mu$ are merged together at the end of each step. The output of Algorithm~\ref{alg:stabMPS} is a set of $K \leq \nsamp$ stabilizer strings. 

In order to find the generators of $\StabGr(\ket{\psi})$, one has to extract a minimal set of independent Pauli generators out of them. This task can be conveniently performed by applying Gaussian elimination on the tableau matrix obtained with the replacements $\sigma^0 \rightarrow (0,0)$, $\sigma^1 \rightarrow (1,0)$, $\sigma^2 \rightarrow (1,1)$, $\sigma^3 \rightarrow (0,1)$ from the list of samples~\cite{Nielsen_chuang_2010}. This tableau has shape $K \times 2N$ and its reduction can be performed at cost $\mathcal{O}(\nsamp N^2)$, if $N<\nsamp$. In practice, these operations are extremely fast since they only involves bitwise operations. The final result is a set of independent generators of $\StabGr(\ket{\psi})$.\\

\begin{algorithm}[H]
\caption{Stabilizer sampling from MPS}\label{alg:stabMPS}
\begin{flushleft}
\hspace*{\algorithmicindent} \textbf{Input}: a right-normalized MPS $\ket{\psi}$ of size $N$
\end{flushleft}
\begin{algorithmic}[1]
\State Initialize: $K=1$, $\{\mathbb{L}^{\mu}\}_{\mu=1}^K=\{ (1) \}_{\mu=1}^K$, $\{ \Pi^{\mu} \}_{\mu=1}^K  = \{ 1 \}_{\mu=1}^K$. 
\For{($i=1$, $i=N$, $i++$)}
     \State Compute $\pi(\alpha | \mu)=\pi_{\rho}(\sigma^{\alpha}|\pmb{\sigma}_{[1,i-1]}^{\mu})$
     \Statex \hspace{4 mm} for $\alpha \in \{ 0,1,2,3 \}$ and $\mu \in \{ 1,2 ... \, K \}$
     \State Select the $(\alpha_{\filledstar}, \mu_{\filledstar})$ s.t. $\pi(\alpha_{\filledstar} | \mu_{\filledstar}) \Pi^{\mu_{\filledstar}} \geq 1/(2^i \chi_i)$
     \State Set $K = \min \big(|\{ (\alpha_{\filledstar}, \mu_{\filledstar}) \}|, \nsamp \big)$
     \State Select $K$ indices $(\alpha_{\filledstar}, \mu_{\filledstar})$ corresponding to \Statex \hspace{4.3 mm} largest values of 
              $\pi(\alpha | \mu) \Pi^{\mu}$, discard the others
     \For{($\mu=1$, $\mu=K$, $\mu++$)}
           \State Set $\{ \pmb{\sigma}_{[1,i]}^{\mu} \} = \{ (\pmb{\sigma}_{[1,i-1]}^{\mu_{\filledstar}}, \sigma^{\alpha_{\filledstar}}) \}$
           \State Update $\{ \Pi^{\mu} \}\rightarrow,  \{ \pi(\alpha_{\filledstar} | \mu_{\filledstar}) \Pi^{\mu_{\filledstar}} \}$ and $\{ \mathbb{L}^{\mu} \}$
     \EndFor
\EndFor
\end{algorithmic}
\begin{flushleft}
\hspace*{\algorithmicindent} \textbf{Output}: $K \leq \nsamp$ stabilizer Pauli string $\{\pmb{\sigma}^{\mu} \}_{\mu=1}^K$
\end{flushleft}
\end{algorithm}

\begin{figure}[t!]
\includegraphics[width=0.8\linewidth]{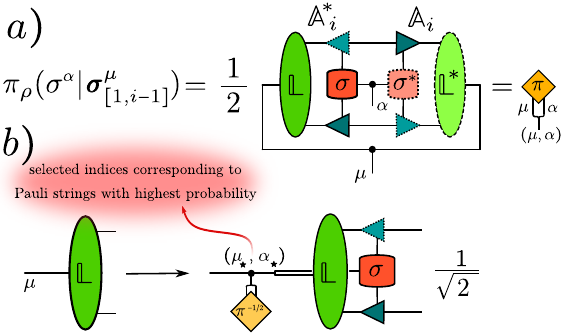}
\caption{The stabilizer sampling Algorithm \ref{alg:stabMPS}. At a generic step $i$, the probabilities $\pi(\alpha | \mu)=\pi_{\rho}(\sigma^{\alpha}|\pmb{\sigma}_{[1,i-1]}^{\mu})$ are computed by means of the stored environment matrices $\mathbb{L}$ ($a$). Afterwards, indices $(\alpha_{\filledstar}, \mu_{\filledstar})$ corresponding to the highest partial probabilities are selected and the tensor $\mathbb{L}$ is updated ($b$) \label{fig:1} }
\end{figure}

\paragraph{Iterations over modified states. --}
While these strategies are already effective in ensuring a favorable probability of sampling $\StabGr(\ket{\psi})$, this can be further increased.
Firstly, after completing the sampling sweep from $i=1$ to $i=N$, one can repeat it in the reverse direction, i.e.\ from $i=N$ to $i=1$. Before of the reversed sweep one has to put the MPS in the left-normalized gauge~\cite{Schollwock_2011}.
Secondly, one can repeat the sampling with modified states $\ket{\psi'} = U_{\mathcal{C}} \ket{\psi}$, where $U_{\mathcal{C}} \in \mathcal{C}_{N}$. Indeed, the dimension of the stabilizer group is not altered by Clifford unitaries, whereas all conditional probabilities are reshuffled. This enables the Algorithm to effectively target unsampled regions of $\StabGr(\ket{\psi})$. In practice, one can set $U_{\mathcal{C}}$ as a random Clifford circuit of depth $D$ and evolve the MPS to obtain $\ket{\psi'}$. $D$ should remain small to avoid excessively increasing the bond dimension of $\ket{\psi'}$ (which grows as $\exp(D)$). Notice that once the sampling of $\ket{\psi'}$ is completed one has to map back the sampled Pauli strings in order to find stabilizer strings of the original state, since $\StabGr(\ket{\psi}) = U_{\mathcal{C}} \StabGr(\ket{\psi'}) U_{\mathcal{C}}^{\dag}$. This task can be easily achieved at cost $\mathcal{O}(N^2)$ in the tableau formalism, because it only involves applying Clifford unitaries~\cite{Aaronson_2004, Stim_2021}. In practice, one can iterate the sampling of modified states $\ket{\psi'}$ for several random $U_{\mathcal{C}}$. At each iteration, one has to collect the newly sampled stabilizer Pauli strings, incorporate them into the tableau containing previously sampled generators and apply Gaussian elimination to find a new set of independent generators of $\StabGr(\ket{\psi})$. The total number of these, which we dub $k$, can only increase with each iteration (converging to the true value $\StabDim_{\psi}$).\\
%\textcolor{red}{In the following numerical experiments $D=1$ or $D=2$.} \\

\begin{figure}[t!]
\includegraphics[width=1.\linewidth]{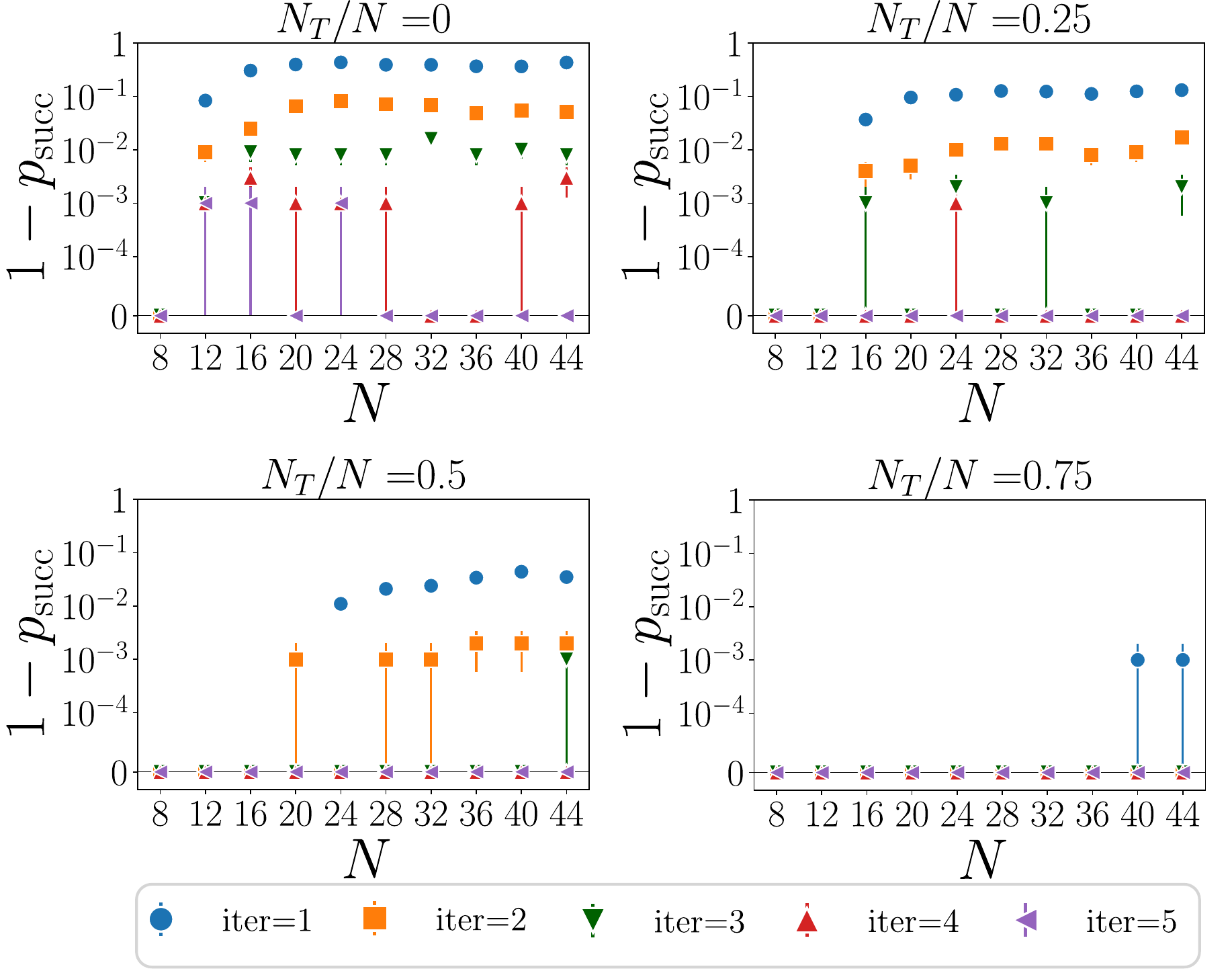}
\caption{(One minus) the probability of correctly collecting all $N-N_T$ stabilizer generators of $\ket{\psi} = U_{\mathcal{C}} \ket{N,N_T}$. Different symbols refer to iterations $1,2,3,4,5$ over modified states. 
\label{fig:2} }
\end{figure}

\begin{figure}[t!]
\includegraphics[width=1.\linewidth]{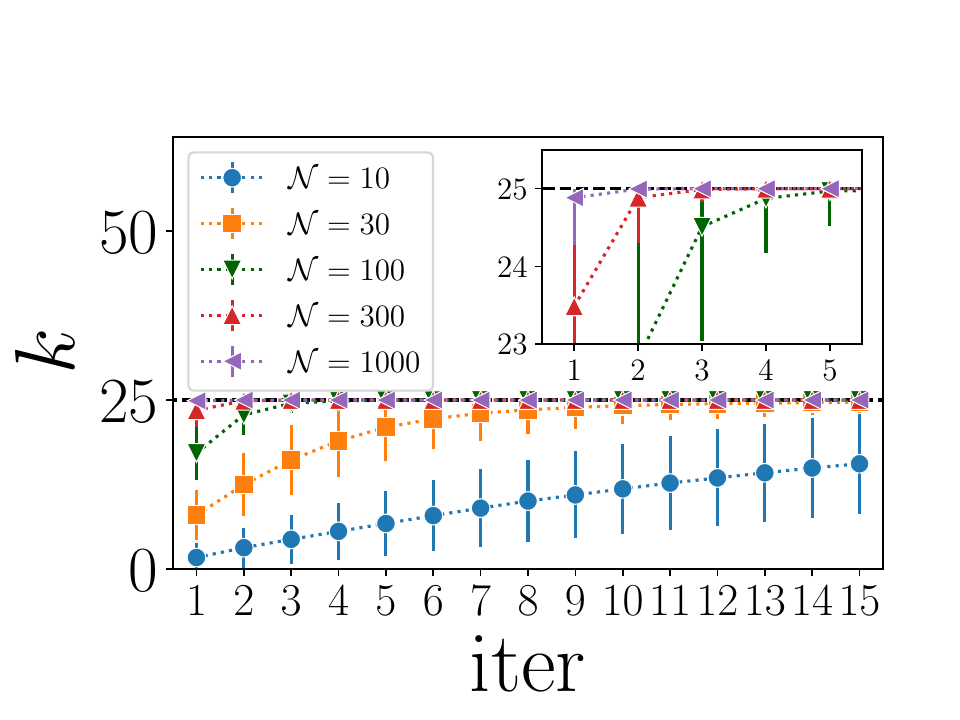}
\caption{Number of discovered generators $k$ for successive iterations iter of the alghorithm over modified states and different sample sizes $\nsamp$ (here $N=50$, $N_T=25$, $\StabDim_{\psi}=25$).
\label{fig:3} }
\end{figure}

\paragraph{Numerical experiments. --} 
%To benchmark our method, 
We prepare the state $\ket{N,N_T} \equiv \ket{0}^{\otimes N-N_T}\ket{T}^{\otimes N_T}$, where $\ket{T} = \Tgate ( \Hgate \ket{0}) = (\ket{0} + e^{i \pi/4} \ket{1})/\sqrt{2}$ is a single-qubit magical state. Afterwards, we apply a random Clifford circuit $U_{\mathcal{C}} \in \mathcal{C}_{N}$ obtaining $\ket{\psi} = U_{\mathcal{C}} \ket{N,N_T}$. By construction $\StabDim_{\psi} = N -N_T$, and the stabilizer generators of $\ket{\psi}$ can be obtained by evolving the generators $\{ \sigma^{3}_1 ... \, \sigma^{3}_{N-N_T} \}$ of $\StabGr(\ket{N,N_T})$ within the tableau formalism. $U_{\mathcal{C}}$ has depth $N$, and each layer consists of gates selected randomly from the Clifford generators $\{ \Hgate, \Sgate, \CNOT \}$. We contract the circuit to obtain $\ket{\psi}$ as an MPS, and we apply our method to detect its stabilizer generators. 

In Fig.~\ref{fig:2}, we show (one minus) the probability $p_{\text{succ}}$ of correctly obtaining $k = \StabDim_{\psi}$ as $N$ increases, for $\nsamp=10^3$ and different values of $N_T/N$. $p_{\text{succ}}$ is assessed through the iteration of the method over $10^3$ realizations of $U_{\mathcal{C}}$. For each case, we consider $5$ iterations over modified states (with $D=1$). Notice that at the final iteration, for all values of $N_T/N$, we achieve $p_{\text{succ}} \simeq 1$ within the statistical uncertainty, meaning that our technique is always able to learn entirely $\StabGr(\ket{\psi})$. In Fig.~\ref{fig:3}, we represent the number $k$ of generators found by our method as a function of subsequent iterations (iter=$1,2,3,...$) over modified states for various sample sizes $\nsamp$. We set $N=50$ and $N_T=25$, so that $\StabDim_{\psi}=N-N_T=25$, and we examine $10^3$ realizations of $U_{\mathcal{C}}$.
In this case, the bond dimension increase up to $\chi \sim 64$. We observe that even with $\nsamp\sim o(10)$, performing around 10 iterations over modified states is enough to learn the complete stabilizer group.

Afterwards, we examine a doped circuit consisting of random Clifford layers, interleaved with layers containing a constant number $\tau$ of $\Tgate$ gates placed on random sites. Clifford layers have a staircase geometry, and gates are uniformly sampled from the two-qubit Clifford group~\cite{Stim_2021}. We consider the initial state $\ket{0}^{\otimes N}$ and we investigate how its stabilizer group, which has dimension $N$, is reduced by the application of $\Tgate$ gates. In Fig.~\ref{fig:4}, we show the number of generators $k$ as a function of the discrete circuit time $n=0,1,2,3...$ for $\tau=3$ and $N=15,30,45$. Data are averaged over many circuit realizations (trajectories). 
The bond dimension $\chi$ increase up to $\chi \sim 1024$ for the larger system size. Dashed lines represent the minimum possible number of generators at time $n$, namely $k_{\text{min}}(n) = N - n \tau$ (see Lemma~\ref{lemma:stabdim} in Supp. Mat.). Data show that in typical circuit realizations the value of $k$ at step $n$ fluctuates above this line. This leads to a prolonged preservation of certain stabilizer symmetries for a time $n$ longer than the theoretically required minimum time $n_{\text{min}} = N/\tau$. A suitable rescaling of $n$ and $k$ (average) with $N$ and $n_{\text{min}}$ respectively (see inset) suggests that this effect might vanish in the thermodynamic limit, whereas fluctuations of $k$ might still be relevant.\\

\begin{figure}[t!]
\includegraphics[width=1.\linewidth]{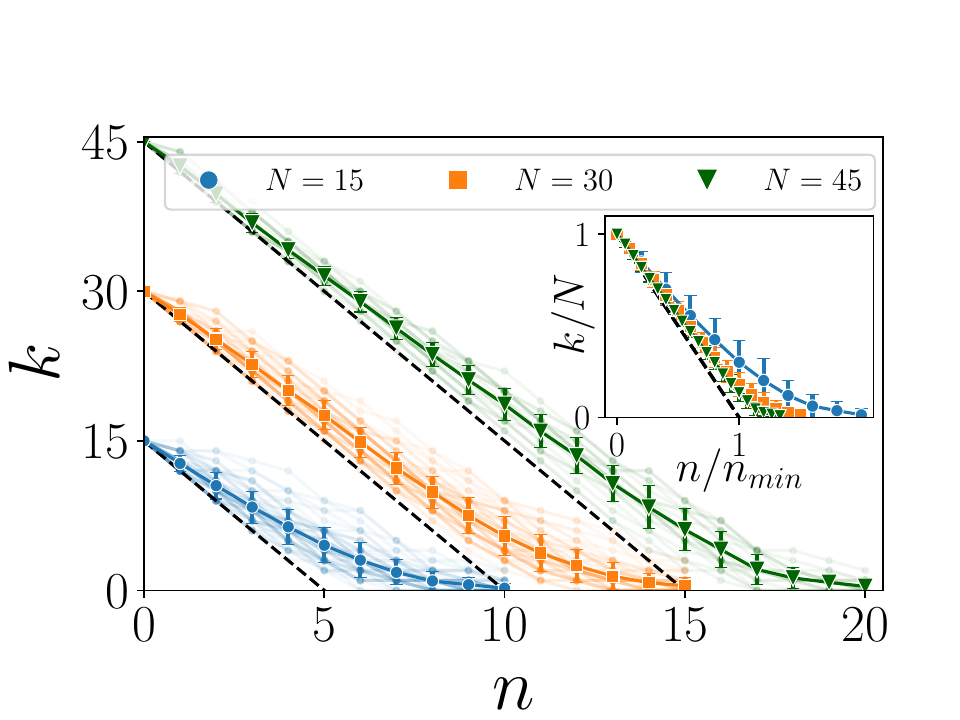}
\caption{Number of stabilizer generators $k$ for a state evolving through a doped quantum circuit, which comprises a random Clifford layer followed by a layer of $\tau=3$ $\Tgate$ gates. Integer $n$ is the discrete time of the circuit (i.e.\ $n=1$ after applying a random Clifford layer followed by a $\Tgate$ layer). Pales represent single trajectories ($N_{\text{traj}}=60$ for $N=15,30$, $N_{\text{traj}}=20$ for $N=45$), bold lines averages. Inset: same plot with $k$ rescaled by $N$ and $n$ rescaled by $n_{\text{min}}$. 
\label{fig:4} }
\end{figure}

\paragraph{Conclusions and outlook. --}
We introduced an effective classical method to learn the stabilizer group of a given Matrix Product State. In our approach, the stabilizer strings are extracted via a biased sampling in the Pauli (Bell) basis over the MPS. During the sampling, on the flight, we discard all Pauli sub-strings $\pmb{\sigma}_{[1,i]}$ at site $i$ such that \ $\pi_{\rho}(\pmb{\sigma}_{[1,i]}) < 1/(2^{i} \chi_i)$, relying on an exact theoretical argument for which we provide a proof. Manipulating our tensor networks has a computational complexity of $\mathcal{O}(\chi^3)$, implying that there are no severe constraint on the MPS bond dimension. In comparison, our method surpasses approaches based on the MPS exact replica trick, where the scaling would be at least $\mathcal{O}(\chi^6)$ (due to the bond dimension of the two replica MPS being $\chi^2$). We have shown the effectiveness of our algorithm in $T$-doped states after information-scrambling induced by a Clifford circuit. 
In addition, we analysed a prototypical case of chaotic non-equilibrium dynamics induced by the interplay of local magic gates and entangling Clifford layers. For different system sizes and up to bond dimension $\chi\sim 10^3$, we studied the dynamical depletion of the stabilizer group. We have shown that the signature of {\it stabilizerness}, which is theoretically lower-bonded by $N-\tau n$ during the discrete time-steps $n$, definitively survives for longer times with non-trivial fluctuations over the trajectories.

Finally, as further investigations, let us mention that the stabilizer group generator of an MPS can have groundbreaking consequences in decreasing the computational complexity of the state representation itself. This may lead to the development of hybrid MPS-stabilizer techniques, introducing innovative methods for classically simulating highly-nontrivial quantum states. \\

\paragraph{Acknowledgments. --}
We are particularly grateful to L. Piroli, A. Paviglianiti, G. Fux, M. Dalmonte, E.Tirrito and P. Tarabunga for inspiring discussions and for collaborations on topics connected with this work. This work was supported by the PNRR MUR project PE0000023-NQSTI, and by the PRIN 2022 (2022R35ZBF) - PE2 - ``ManyQLowD''. G.L. were partially founded by ANR-22-CPJ1-0021-01.\\

\paragraph{Note added. --}
While completing this manuscript, we became aware of a parallel independent work on non-stabilizerness and tensor networks by Tarabunga, Tirrito, Ba\~nuls and Dalmonte, where they make use of a different approach, based on compressed-MPS folding technique, to reduce the cost of the exact replica trick in computing nonstabiliserness. The work will appear on the same arvix post. \\

\bibliography{bib}

\newpage

\onecolumngrid
\appendix
\appendixtitleon
\appendixtitletocon

\begin{appendices}
\section{SUPPLEMENTARY MATERIALS}

Here, we provide proofs of mathematical facts used in the main text. 

\begin{lemma}\label{lemma:stabdim}
The stabilizer dimension $\StabDim_{\psi}$ of a $t-$doped state $\ket{\psi}$ is at least $N-t$.
\end{lemma}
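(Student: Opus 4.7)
The plan is to proceed by induction on the number of T gates inserted in the circuit, using the elementary facts that Clifford gates preserve the stabilizer dimension exactly, while a single non-Clifford $\Tgate$ gate can decrease it by at most one. The base case is the reference state $\ket{0}^{\otimes N}$, whose stabilizer group is generated by $\{\sigma^3_1,\dots,\sigma^3_N\}$, giving $\StabDim=N$. The Clifford step is immediate: if $U_{\mathcal{C}}$ is Clifford then $U_{\mathcal{C}} \mathcal{P}_N U_{\mathcal{C}}^\dag = \mathcal{P}_N$, so $\StabGr(U_{\mathcal{C}}\ket{\psi}) = U_{\mathcal{C}}\StabGr(\ket{\psi})U_{\mathcal{C}}^\dag$ is an isomorphic abelian subgroup of $\mathcal{P}_N$, and the dimension is invariant.

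The heart of the argument is the inductive step for a single $\Tgate$ gate applied, say, on qubit $j$. The key observation is that conjugation by $\Tgate$ fixes $\sigma^0$ and $\sigma^3$ (they commute with $\Tgate$) but rotates $\sigma^1,\sigma^2$ out of the Pauli group. I would therefore consider the subgroup
\begin{equation}
\StabGr'(\ket{\psi}) \,=\, \{\,\pmb{\sigma}\in\StabGr(\ket{\psi}) \,:\, \sigma_j\in\{\sigma^0,\sigma^3\}\,\}
\end{equation}
and observe that every such $\pmb{\sigma}$ commutes with $\Tgate_j$, hence stabilizes $\Tgate_j\ket{\psi}$. So $\Tgate_j \StabGr'(\ket{\psi})\Tgate_j^\dag \subseteq \StabGr(\Tgate_j\ket{\psi})$, giving the bound $\StabDim_{\Tgate_j\psi}\ge \log_2|\StabGr'(\ket{\psi})|$.

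The remaining task is to show that $\StabGr'(\ket{\psi})$ has index at most $2$ in $\StabGr(\ket{\psi})$. Since $\StabGr(\ket{\psi})$ is abelian, the local Pauli components at qubit $j$ (modulo signs) form an abelian subgroup of the single-qubit Pauli group $\{\sigma^0,\sigma^1,\sigma^2,\sigma^3\}/\{\pm 1\}$. The abelian subgroups of this group are precisely $\{\sigma^0\}$ and the three two-element subgroups $\{\sigma^0,\sigma^\alpha\}$, so the projection of $\StabGr(\ket{\psi})$ onto qubit $j$ has order $1$ or $2$. The map $\pmb{\sigma}\mapsto\sigma_j$ modulo $\{\sigma^0,\sigma^3\}$ is then a group homomorphism onto a group of order $1$ or $2$, whose kernel is exactly $\StabGr'(\ket{\psi})$. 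Thus $[\StabGr(\ket{\psi}):\StabGr'(\ket{\psi})]\le 2$ and $\StabDim_{\Tgate_j\psi}\ge\StabDim_\psi-1$.

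Iterating these two rules across the circuit, each of the at most $t$ occurrences of a $\Tgate$ gate reduces the stabilizer dimension by at most one, while Clifford gates leave it unchanged, so $\StabDim_\psi\ge N-t$, as claimed. The only mildly subtle point is the index-$2$ argument; I would make sure to justify it carefully by invoking the classification of abelian subgroups of the single-qubit Pauli group rather than leaving it implicit.
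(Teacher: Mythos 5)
Your overall strategy coincides with the paper's: induction on $t$, invariance of $\StabDim$ under Clifford conjugation, and the key observation that any $\pmb{\sigma}\in\StabGr(\ket{\phi})$ carrying $\sigma^0$ or $\sigma^3$ on the doped qubit commutes with $\Tgate$ and hence remains a stabilizer of $\Tgate\ket{\phi}$. Where you genuinely differ is in bounding how many such elements there are. The paper parametrizes a generic element of $\StabGr(\ket{\phi})$ by its exponent vector over the generators and explicitly counts, via two parity constraints, at least $2^{\StabDim_{\phi}-1}$ admissible choices (with a case analysis on whether $k_1$ or $k_2$ vanish). You instead exhibit your subgroup $\StabGr'$ as the kernel of the homomorphism $\pmb{\sigma}\mapsto\sigma_j \bmod \{\sigma^0,\sigma^3\}$ into a group of order at most $2$, so that $[\StabGr:\StabGr']\le 2$. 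This is shorter, more conceptual, and avoids the counting altogether.

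There is, however, a genuine error in the justification you say you would lean on. It is not true that the site-$j$ components of $\StabGr(\ket{\psi})$ form an abelian subgroup of the single-qubit Pauli group, nor that the projection onto qubit $j$ has order $1$ or $2$: commutativity of full strings does not descend to single sites. For the Bell state $(\ket{00}+\ket{11})/\sqrt{2}$ one has $\StabGr=\{\Id\otimes\Id,\ \sigma^1\otimes\sigma^1,\ -\sigma^2\otimes\sigma^2,\ \sigma^3\otimes\sigma^3\}$, whose projection onto the first qubit is the full Klein four-group $\{\sigma^0,\sigma^1,\sigma^2,\sigma^3\}$ modulo phases. Fortunately you do not need that claim: the quotient of $\mathbb{Z}_2\times\mathbb{Z}_2$ by the order-two subgroup generated by the class of $\sigma^3$ has order $2$ regardless of what the image of the projection is, so the kernel of $\pmb{\sigma}\mapsto\sigma_j \bmod \{\sigma^0,\sigma^3\}$ has index at most $2$ purely by the first isomorphism theorem; abelianness of $\StabGr$ is never used. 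Drop the ``classification of abelian subgroups'' step and keep only the homomorphism-kernel argument, and your proof is complete and arguably cleaner than the paper's.
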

\begin{proof} We revisit the proof given in Ref.~\cite{Grewal_2023}. The idea is to use induction on $t$. If $t=0$, the statement is self-evident, since $\ket{\psi}$ is a stabilizer. Now, we consider the generic case in which $\ket{\psi} = U_{\mathcal{C}} \Tgate \ket{\phi}$, where $U_{\mathcal{C}}$ is a Clifford unitary and $\ket{\phi}$ is a $(t-1)$-doped states. Thanks to the inductive hypothesis, we assume $\StabDim_{\phi} \geq N - (t-1)$. Since $U_{\mathcal{C}}$ does not change the stabilizer dimension, we have to show that the stabilizer rank of the state $\Tgate \ket{\phi}$ is at least $N-t$. To this purpose, let us consider the stabilizer group $\StabGr(\ket{\phi}) = \braket{\pmb{\sigma}_1 ... \, \pmb{\sigma}_{\StabDim_{\phi}} }$ and observe that every Pauli string $\pmb{\sigma} \in \StabGr(\ket{\phi})$ that commutes with $\Tgate$ is also a stabilizer string for $\Tgate \ket{\phi}$, since
\begin{equation}
    \pmb{\sigma} \Tgate \ket{\phi} = \Tgate \pmb{\sigma} \ket{\phi} = \pm \Tgate \ket{\phi} \, .
\end{equation}
How many elements of $\StabGr(\ket{\phi})$ commute with $\Tgate$? Certainly, all the $\pmb{\sigma} \in \StabGr(\ket{\phi})$ containing either $\sigma^{0}$ or $\sigma^{3}$ on the site $i \in \{1,2, ... \, N\}$ in which $\Tgate$ is applied. Now, a generic Pauli string $\pmb{\sigma}$ of $\StabGr(\ket{\phi})$ can be decomposed as
\begin{equation}
\pmb{\sigma} = \prod_{\mu=1}^{\StabDim_{\phi}}(\pmb{\sigma}_{\mu})^{\alpha_{\mu}} \, ,
\end{equation}
where the index $\mu=1,2 ... \, \StabDim_{\phi}$ labels here the generators of $\StabGr(\ket{\phi})$ and the exponents $\alpha_{\mu}$ are either $0$ or $1$. We can order the generators $\pmb{\sigma}_{\mu}$ in such a way that the first $k_0$ generators have $\sigma^0$ on site $i$, the subsequent $k_1$ generators have $\sigma^1$ on site $i$, the subsequent $k_2$ generators have $\sigma^2$ on site $i$ and the last $k_3$ generators have $\sigma^3$ on site $i$ ($\StabDim_{\phi} = k_0 + k_1 + k_2 + k_3$). The component of $\pmb{\sigma}$ on $i$ will be
\begin{equation}\label{eq:componentisigma}
\sigma_i = (\sigma^0)^{\nu_0}(\sigma^1)^{\nu_1}
(\sigma^2)^{\nu_2}(\sigma^3)^{\nu_3}
\end{equation}
where the exponents $\nu_0, \nu_1, \nu_2, \nu_3$ are obtained by taking the sum $\text{mod } 2$ of the $\alpha_{\mu}$ over all generators containing respectively the Pauli matrices $\sigma^{0}, \sigma^{1},\sigma^{2}, \sigma^{3}$ on $i$, i.e.\ 
\begin{equation*}
\nu_0 = \sum_{\mu=1}^{k_0} \alpha_{\mu} \Mod{2} \, , \quad \nu_1 = \sum_{\mu=k_0 + 1}^{k_0 + k_1} \alpha_{\mu} \Mod{2} \, , \quad \nu_2 = \sum_{\mu=k_0 + k_1 + 1}^{k_0 + k_1 + k_2} \alpha_{\mu} \Mod{2} \, , \quad \nu_3 = \sum_{\mu=k_0 + k_1 + k_2 + 1}^{\StabDim_{\phi}} \alpha_{\mu} \Mod{2} \, .
\end{equation*}
The value taken by $\nu_0$ and $\nu_3$ in Eq.\ref{eq:componentisigma} is not relevant, whereas there exist two possibilities to set $(\nu_1,\nu_2)$ in order to achieve $\sigma_i \propto \sigma^{0}$ or $\sigma_i \propto \sigma^{3}$,  namely: $(\nu_1=0,\nu_2=0)$,$(\nu_1=1,\nu_2=1)$. Let us consider the case $k_1, k_2 \neq 0$ and observe that one has 
\begin{equation}\label{eq:conteggio}
2^{k_0} \cdot 2^{k_3} \cdot 2^{k_1 - 1} \cdot 2^{k_2 - 1}
\end{equation}
ways of selecting the exponents $\alpha_{\mu}$ to obtain 
$(\nu_1=0,\nu_2=0)$, and the same number for $(\nu_1=1,\nu_2=1)$. The factors $2^{k_0} 2^{k_3}$ in Eq.\ref{eq:conteggio} are due to the fact that we are not interested in the values of the first $k_0$ and the last $k_3$ exponents $\alpha_{\mu}$, whereas we get $2^{k_1 - 1} 2^{k_2 - 1}$ due to imposing constraints on the parity of the sums $\sum_{\mu=k_0 + 1}^{k_0 + k_1} \alpha_{\mu}$ and $\sum_{\mu=k_0 + k_1 + 1}^{k_0 + k_1 + k_2} \alpha_{\mu}$. Ultimately, one has in total $2 \cdot 2^{k_0} 2^{k_3} 2^{k_1 - 1} 2^{k_2 - 1} = 2^{\StabDim_{\phi}-1}$ ways to select the exponents $\alpha_{\mu}$ in order to get $\sigma_i \propto \sigma^{0}$ or $\sigma_i \propto \sigma^{3}$. The cases in which $k_1$ or $k_2$ can be treated similarly, and the counting is always $\geq 2^{\StabDim_{\phi}-1}$. In conclusion, we found that there are at least $2^{\StabDim_{\phi}-1}$ elements of $\StabGr(\ket{\phi})$ commuting with $\Tgate$. Therefore $\StabDim_{\psi} \geq N - (t-1) -1 = N-t$.
\end{proof}

\begin{lemma}\label{lemma:partialstabprob}
Given a Pauli string $\pmb{\sigma}=(\sigma_1 \cdots \sigma_i, \sigma_{i+1} \cdots \sigma_N)$ belonging to the stabilizer group $\StabGr(\ket{\psi})$ of a right-normalized MPS $\ket{\psi}$, for any fixed site $i \in \{1,2,... \, N \}$ one has 
\begin{equation}
    \pi_{\rho}(\pmb{\sigma}_{[1,i]}) \geq \frac{1}{2^{i} \chi_i} \, ,
\end{equation}
where $\rho = \ket{\psi} \bra{\psi}$, $\chi_i$ is the MPS bond dimension between sites $i$ and $i+1$ and $\pi_{\rho}(\pmb{\sigma}_{[1,i]})$ is the probability of sampling the sub-string $\pmb{\sigma}_{[1,i]} = (\sigma_{i} \cdots \sigma_{i+1})$ no matter the outcomes on the other sites.  
\end{lemma}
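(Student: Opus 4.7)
The plan is to exploit the right-canonical Schmidt structure of the MPS across the cut between sites $i$ and $i{+}1$, reduce the claim to a Cauchy--Schwarz inequality, and bound the resulting Frobenius norm using the fact that Pauli strings are unitary. Right-normalization yields a decomposition $|\psi\rangle = \sum_{\alpha=1}^{\chi_i}|\tilde L_\alpha\rangle|R_\alpha\rangle$ with $\langle R_\alpha|R_\beta\rangle = \delta_{\alpha\beta}$, the $|\tilde L_\alpha\rangle$ not necessarily being orthonormal. Abbreviating $O_i \equiv \sigma_1\cdots\sigma_i$ and $\tilde\sigma \equiv \sigma_{i+1}\cdots\sigma_N$, I introduce the $\chi_i\times\chi_i$ Hermitian matrices $U_{\alpha\beta}=\langle\tilde L_\alpha|O_i|\tilde L_\beta\rangle$ and $M_{\alpha\beta}=\langle R_\alpha|\tilde\sigma|R_\beta\rangle$.

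First I would marginalize the tail Pauli sum defining $\pi_\rho$ using the completeness relation $\sum_{\alpha=0}^{3}\sigma^{\alpha}_{ab}\sigma^{\alpha}_{cd} = 2\delta_{ad}\delta_{bc}$ applied on each of the $N-i$ sites beyond the cut. After inserting the Schmidt expansion into $\langle\psi|O_i\otimes\pmb\tau|\psi\rangle$ and squaring, the sum over $\pmb\tau$ collapses via the orthonormality of $\{|R_\alpha\rangle\}$, yielding
\begin{equation}
\pi_\rho(\pmb\sigma_{[1,i]}) \;=\; \frac{1}{2^i}\sum_{\alpha,\beta}|U_{\alpha\beta}|^2 \;=\; \frac{1}{2^i}\|U\|_F^2 ,
\end{equation}
equivalently $\pi_\rho(\pmb\sigma_{[1,i]}) = 2^{-i}\Trace[\rho_{[1,i]}O_i\rho_{[1,i]}O_i]$ in terms of the reduced density matrix on the first $i$ sites. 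I expect this bookkeeping step to be the main technical hurdle; once the identity is in place everything that follows is essentially one line.

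Next I would invoke the stabilizer condition $\pmb\sigma|\psi\rangle = \pm|\psi\rangle$. Taking the expectation against $\langle\psi|$ and expanding in the same Schmidt basis gives $\pm 1 = \sum_{\alpha,\beta}U_{\alpha\beta}M_{\alpha\beta}$, so Cauchy--Schwarz on the Frobenius inner product yields $1 \leq \|U\|_F^2\|M\|_F^2$. The final ingredient is the bound $\|M\|_F^2 \leq \chi_i$: letting $V$ denote the isometry whose columns are $\{|R_\alpha\rangle\}$, so that $V^\dagger V = \Id_{\chi_i}$, one has $M = V^\dagger\tilde\sigma V$, and since $\tilde\sigma$ is unitary
\begin{equation}
\|M\|_F^2 \;\leq\; \|V^\dagger\tilde\sigma\|_F^2 \;=\; \Trace[V^\dagger\tilde\sigma\tilde\sigma^\dagger V] \;=\; \Trace[\Id_{\chi_i}] \;=\; \chi_i .
\end{equation}
Combining the three ingredients gives $1 \leq 2^i\chi_i\,\pi_\rho(\pmb\sigma_{[1,i]})$, which is exactly the asserted lower bound.
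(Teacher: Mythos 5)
Your proposal is correct, and its skeleton coincides with the paper's: the stabilizer condition gives $\braket{\psi|\pmb{\sigma}|\psi}^2=1$, this overlap is split across the bond between sites $i$ and $i+1$, Cauchy--Schwarz separates the two halves, the left factor is identified with $2^{i}\pi_{\rho}(\pmb{\sigma}_{[1,i]})$ (your marginalization via the Pauli completeness relation $\sum_{\alpha}\sigma^{\alpha}_{ab}\sigma^{\alpha}_{cd}=2\delta_{ad}\delta_{bc}$ correctly reproduces what the paper only states graphically through the environment tensor $\tilde{\mathbb{L}}_i$, whose squared Frobenius norm is exactly $\pi_{\rho}(\pmb{\sigma}_{[1,i]})=2^{-i}\lVert U\rVert_F^2$), and the right factor is bounded by $\chi_i$. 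Where you genuinely depart from the paper is in proving that last bound, $\lVert M\rVert_F^2\leq\chi_i$, which is the content of the paper's separate Lemma~\ref{lemma:Rnorm} on $\Trace[\tilde{\mathbb{R}}_i^{\dag}\tilde{\mathbb{R}}_i]\leq\chi_i/2^{N-i}$ (your $M$ is $2^{(N-i)/2}\tilde{\mathbb{R}}_i$ up to conventions). The paper establishes this iteratively: it builds $\tilde{\mathbb{R}}_i$ site by site with transfer matrices $E_k$, performs an SVD at each step, uses right-normalization to show $\lVert E_k[\tilde{\mathbb{R}}_k]\rVert_\infty\leq\lVert\tilde{\mathbb{R}}_k\rVert_\infty$, and only at the end converts operator norm to Frobenius norm at the cost of a factor $\chi_i$. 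You instead observe that right-normalization makes the map $V$ from the bond space to the physical space of sites $i+1,\dots,N$ a single global isometry, so that $M=V^{\dag}\tilde{\pmb{\sigma}}V$ and $\lVert M\rVert_F^2\leq\lVert V^{\dag}\tilde{\pmb{\sigma}}\rVert_F^2=\Trace[V^{\dag}V]=\chi_i$ follows in one line from unitarity of the Pauli string. This is cleaner and arguably more illuminating (it makes transparent that the $\chi_i$ comes from the rank of the isometry and nothing else); the paper's local formulation has the practical advantage of matching the data structures the algorithm actually stores and updates during the sweep. One cosmetic point: in $\pm1=\sum_{\alpha\beta}U_{\alpha\beta}M_{\alpha\beta}$ the pairing is not literally the Frobenius inner product (no complex conjugate), but since $|\overline{M_{\alpha\beta}}|=|M_{\alpha\beta}|$ the Cauchy--Schwarz bound $1\leq\lVert U\rVert_F^2\lVert M\rVert_F^2$ is unaffected.
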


\begin{proof}
By hypothesis we have
\begin{equation}\label{eq:proof_1}
\Pi_{\rho}(\pmb{\sigma})=\Pi_{\rho}(\sigma_1 ... \, \sigma_i , \sigma_{i+1} ... \, \sigma_N) = \frac{1}{2^N} \braket{\psi|\sigma_1 ... \, \sigma_i \sigma_{i+1} ... \, \sigma_N|\psi}^2 \equiv \frac{1}{2^N}
\end{equation}
since $\pmb{\sigma} \ket{\psi} = \pm \ket{\psi}$. Eq.\ref{eq:proof_1} can be graphically represented as follows 

\vspace{1 cm}
\begin{figure*}[h!]
\includegraphics[width=0.6\linewidth]{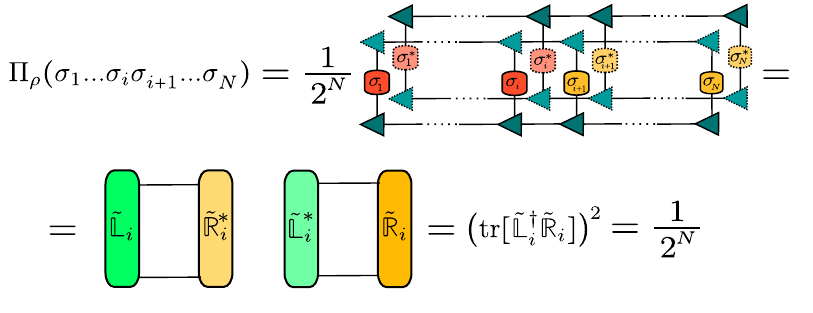}\label{fig:proof_1} 
\end{figure*}

where we introduced two environment tensors $\tilde{\mathbb{L}_i}$, $\tilde{\mathbb{R}_i}$, which are defined as \\
\begin{figure}[h!]
\includegraphics[width=0.6\linewidth]{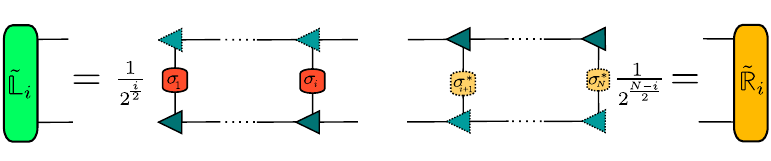}
\end{figure}

Notice that $\tilde{\mathbb{L}}_i$, $\tilde{\mathbb{R}}_i$ are defined similarly to $\mathbb{L}$, $\mathbb{R}$ at site $i$ in the main text, a part for a normalization factor. 
The probability of obtaining the sub-string $\pmb{\sigma}_{[1,i]}=(\sigma_1 \cdots \sigma_i)$ can now be written as 
\begin{figure}[h!]
\includegraphics[width=0.67\linewidth]{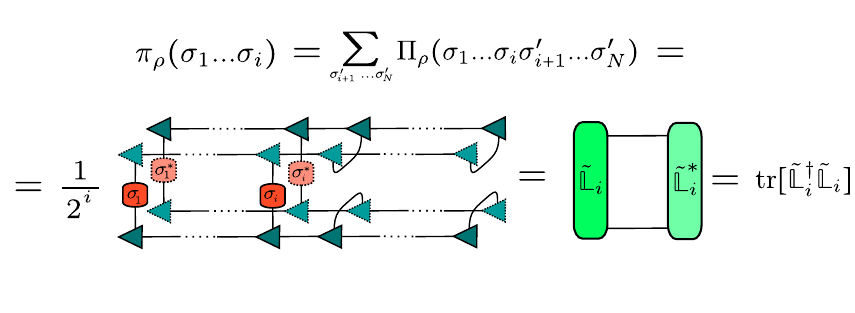}
\end{figure}

Now, by the Cauchy–Schwarz inequality, we have 
\begin{equation}
\big( \Trace[\tilde{\mathbb{L}}_i^{\dag} \tilde{\mathbb{R}}_i] \big)^2 \leq \Trace[\tilde{\mathbb{L}}_i^{\dag} \tilde{\mathbb{L}}_i] \cdot \Trace[\tilde{\mathbb{R}}_i^{\dag} \tilde{\mathbb{R}}_i] \, ,
\end{equation}
which implies 
\begin{equation}
\frac{1}{2^N} \leq \pi_{\rho}(\pmb{\sigma}_{[1,i]}) \Trace[\tilde{\mathbb{R}}_i^{\dag} \tilde{\mathbb{R}}_i] \, .
\end{equation}
We can now use the fact that $\Trace[\tilde{\mathbb{R}}_i^{\dag} \tilde{\mathbb{R}}_i] \leq \frac{\chi_i}{2^{N-i}}$, that is proven in the next Lemma~\ref{lemma:Rnorm}. We have 
\begin{equation}
\pi_{\rho}(\pmb{\sigma}_{[1,i]}) \geq \frac{1}{2^N} \frac{1}{\Trace[\tilde{\mathbb{R}}_i^{\dag} \tilde{\mathbb{R}}_i]} \geq \frac{1}{2^N} \frac{2^{N-i}}{\chi_i} = \frac{1}{2^i \chi_i} \, .
\end{equation}
\end{proof}

\begin{lemma}\label{lemma:Rnorm}
The Frobenius norm of the environment tensor $\tilde{\mathbb{R}}_i$ is upper bounded by 
\end{lemma}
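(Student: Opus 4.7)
The plan is to rewrite $\tilde{\mathbb{R}}_i$ in a form that makes right-normalization of the MPS directly exploitable, and then bound the Frobenius norm by a one-line operator inequality. Concretely, I would first define, for each $a\in\{1,\dots,\chi_i\}$, the conditional right-half state
\begin{equation}
    \ket{\tilde\psi^{(a)}_{i+1,N}} \;=\; \sum_{s_{i+1},\dots,s_N} \bigl(\mathbb{A}^{s_{i+1}}_{i+1}\cdots\mathbb{A}^{s_N}_N\bigr)_{a,1}\,\ket{s_{i+1}\dots s_N}\,.
\end{equation}
The crucial observation is that right-normalization $\sum_{s_j}\mathbb{A}^{s_j}_j(\mathbb{A}^{s_j}_j)^{\dagger}=\Id$, applied successively from site $N$ down to site $i+1$, is precisely the statement that the family $\{\ket{\tilde\psi^{(a)}_{i+1,N}}\}_{a=1}^{\chi_i}$ is an \emph{orthonormal} set in the Hilbert space of sites $i+1,\dots,N$. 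Equivalently, assembling the kets into a rectangular isometry $V$ of shape $2^{N-i}\times\chi_i$, one has $V^{\dagger}V=\Id_{\chi_i}$, and $P\equiv VV^{\dagger}$ is an orthogonal projector of rank $\chi_i$.

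Next, I would read off from the graphical definition of $\tilde{\mathbb{R}}_i$ (the second figure in the proof of Lemma~\ref{lemma:partialstabprob}) that
\begin{equation}
    \tilde{\mathbb{R}}_i \;=\; \frac{1}{2^{(N-i)/2}}\,V^{\dagger}\,\Sigma\,V\,,\qquad \Sigma \equiv \sigma_{i+1}\otimes\cdots\otimes\sigma_N\,,
\end{equation}
so that the Paulis fixed on sites $i+1,\dots,N$ enter only through the unitary $\Sigma$, and the prefactor $1/2^{(N-i)/2}$ matches the normalization convention that the total sum over Paulis of $\Pi_\rho(\pmb{\sigma})$ equals one. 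This rewriting is the key structural step: it converts the tensor-network object into a product of an isometry, a unitary, and another isometry.

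The bound then follows immediately. Writing the Frobenius norm as
\begin{equation}
    \Trace[\tilde{\mathbb{R}}_i^{\dagger}\tilde{\mathbb{R}}_i] \;=\; \frac{1}{2^{N-i}}\,\Trace\!\bigl[V^{\dagger}\Sigma^{\dagger}\,VV^{\dagger}\,\Sigma V\bigr]\,,
\end{equation}
and using that $VV^{\dagger}=P\leq\Id$ as operators, one obtains $V^{\dagger}\Sigma^{\dagger}P\Sigma V\leq V^{\dagger}\Sigma^{\dagger}\Sigma V=V^{\dagger}V=\Id_{\chi_i}$ in the positive-semidefinite order, hence
\begin{equation}
    \Trace[\tilde{\mathbb{R}}_i^{\dagger}\tilde{\mathbb{R}}_i]\;\leq\;\frac{1}{2^{N-i}}\,\Trace[\Id_{\chi_i}]\;=\;\frac{\chi_i}{2^{N-i}}\,,
\end{equation}
which is the claimed inequality. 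I do not anticipate a hard step here: once $\tilde{\mathbb{R}}_i$ is recognized as $V^{\dagger}\Sigma V$ up to normalization, everything is linear algebra. The only subtlety to double-check is the bookkeeping of the prefactor $2^{-(N-i)/2}$, since an extra factor of $2$ per site is hidden in how the Pauli sampling probability $\Pi_\rho(\pmb{\sigma})=2^{-N}\Trace[\rho\pmb{\sigma}]^2$ distributes between $\tilde{\mathbb{L}}_i$ and $\tilde{\mathbb{R}}_i$; verifying this against the convention used to state Lemma~\ref{lemma:partialstabprob} is the one place the argument must be aligned carefully.
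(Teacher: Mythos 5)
Your proof is correct, but it takes a genuinely different route from the paper's. The paper proceeds site by site: it writes $\tilde{\mathbb{R}}_{k-1}=\tfrac{1}{\sqrt{2}}E_k[\tilde{\mathbb{R}}_k]$ in terms of a Pauli-dressed transfer matrix, performs an SVD of $\tilde{\mathbb{R}}_k$ at each step to exhibit $E_k[\tilde{\mathbb{R}}_k]=P\,(\sigma^*_k\otimes\Lambda)\,Q$ with $P,Q$ isometries (right-normalization of the single tensor $\mathbb{A}_k$), deduces $\|E_k[\tilde{\mathbb{R}}_k]\|_{\infty}\leq\|\tilde{\mathbb{R}}_k\|_{\infty}$, iterates from the right boundary to obtain $\|\tilde{\mathbb{R}}_i\|_{\infty}^2\leq 2^{-(N-i)}$, and only then converts to the Frobenius norm via $\|O\|_2^2\leq\chi_i\|O\|_{\infty}^2$. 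You instead make one global observation: right-normalization applied from site $N$ down to $i+1$ says the right-block vectors assemble into a single isometry $V$ with $V^{\dagger}V=\Id_{\chi_i}$, so that $\tilde{\mathbb{R}}_i=2^{-(N-i)/2}\,V^{\dagger}\Sigma V$ (up to an immaterial complex conjugation stemming from the $\sigma^*$ convention in the transfer matrix) with $\Sigma$ unitary, and the bound follows in one line from $VV^{\dagger}\leq\Id$. The two arguments exploit the same mechanism --- isometries from right-normalization plus unitarity of the Pauli string --- but yours removes the induction and the per-site SVD, and as a byproduct delivers the sharper intermediate statement $\|\tilde{\mathbb{R}}_i\|_{\infty}\leq 2^{-(N-i)/2}$ directly, which is precisely what the paper reaches by iteration before multiplying by $\chi_i$. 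Your bookkeeping of the prefactor also checks out against the paper's recursion $\tilde{\mathbb{R}}_{k-1}=\tfrac{1}{\sqrt{2}}E_k[\tilde{\mathbb{R}}_k]$, i.e.\ one factor of $1/\sqrt{2}$ per site to the right of the cut.
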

\begin{equation}
    ||\tilde{\mathbb{R}}_i||_2^2 = \Trace[\tilde{\mathbb{R}}_i^{\dag} \tilde{\mathbb{R}}_i] \leq \frac{\chi_i}{2^{N-i}} \, .
\end{equation}

\begin{proof}
Here, we use the following conventions for the norm of a matrix $O$ of shape $n \times n$ having singular values $\lambda_1 \geq \lambda_2 \geq ... \, \geq \lambda_n \geq 0$~\cite{enwiki:1180861007,Watrous_2018}. The Frobenius norm is: $||O||_2^2= \Trace[O^{\dag} O] = \sum_{k=1}^n \lambda_k^2$. The operator norm is: $||O||_{\infty}= \lambda_1$. Obviously: $||O||_2^2 \leq n ||O||_{\infty}^2$. Moreover, it is well known that for any matrices $A,B$ one has 
$||A B||_{\infty} \leq ||A||_{\infty} ||B||_{\infty}$ and $||A \otimes B||_{\infty} = ||A||_{\infty} ||B||_{\infty}$. Let us now consider the MPS transfer matrix~\cite{Schollwock_2011}
\begin{equation}
 E_k = \sum_{s',s}  (\sigma^{*}_k)_{s's} 
\mathbb{A}^{s'}_{k} \otimes \big(\mathbb{A}^{s}_{k}\big)^*
\end{equation}
which can be represented as follows: \\
\begin{figure}[h!]
\includegraphics[width=0.35\linewidth]{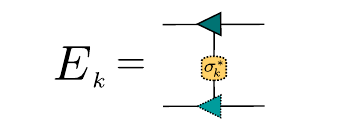}
\end{figure}
\vspace{5 mm}

The environment tensors $\mathbb{R}$ can be obtained iteratively by applying the transfer matrices $E_k$, since $\tilde{\mathbb{R}}_{k-1} = \frac{1}{\sqrt{2}} E_k[\tilde{\mathbb{R}}_{k}]$, i.e.\ 
\begin{figure}[h!]
\includegraphics[width=0.35\linewidth]{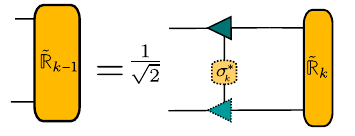}
\end{figure}

First of all, let us prove that 
\begin{equation}\label{eq:utile}
||E_k[\tilde{\mathbb{R}}_{k}]||_{\infty} \leq ||\tilde{\mathbb{R}}_{k}||_{\infty} \, .  
\end{equation} 
We perform a Singular Value Decomposition (SVD) of $\tilde{\mathbb{R}}_{k}$. We write $\tilde{\mathbb{R}}_{k} = \mathbb{U} \Lambda \mathbb{V}$, where $\mathbb{U}$ and $\mathbb{V}$ are unitary matrices (since $\tilde{\mathbb{R}}_{k}$ is a square matrix) and $\Lambda$ is a diagonal matrix containing the (non-negative) singular eigenvalues of $\tilde{\mathbb{R}}_{k}$. The action of $E_k$ over the environment tensor is therefore
\begin{equation}
 E_k[\tilde{\mathbb{R}}_{k}] = \sum_{s',s}  (\sigma^{*}_k)_{s's} 
\mathbb{A}^{s'}_{k} \, \mathbb{U} \Lambda \mathbb{V} \, \big(\mathbb{A}^{s}_{k}\big)^{\dag} \, ,
\end{equation}
which can be also recast as 
\begin{equation}
 E_k[\tilde{\mathbb{R}}_{k}] = P \big( \sigma^{*}_k \otimes \Lambda \big) Q \, ,
\end{equation}
where 
\begin{equation}
P = 
\begin{pmatrix}
\mathbb{A}^{0}_{k} \mathbb{U} & \mathbb{A}^{1}_{k} \mathbb{U} \\
\end{pmatrix}
\qquad \sigma^{*}_k \otimes \Lambda = 
\begin{pmatrix}
(\sigma^{*}_k)_{00} \Lambda & (\sigma^{*}_k)_{01} \Lambda \\
(\sigma^{*}_k)_{10} \Lambda & (\sigma^{*}_k)_{11} \Lambda \\
\end{pmatrix}
\qquad Q=
 \begin{pmatrix}
\mathbb{V} \big(\mathbb{A}^{0}_{k}\big)^{\dag}\\ 
\mathbb{V} \big(\mathbb{A}^{1}_{k}\big)^{\dag}\\
\end{pmatrix} \, .
\end{equation}
We have: 
\begin{equation}
P P^{\dag} = 
\begin{pmatrix}
\mathbb{A}^{0}_{k} \mathbb{U} & \mathbb{A}^{1}_{k} \mathbb{U} \\
\end{pmatrix}
\begin{pmatrix}
\mathbb{U}^{\dag} (\mathbb{A}^{0}_{k})^{\dag}  \\ \mathbb{U}^{\dag} (\mathbb{A}^{1}_{k})^{\dag}  \\
\end{pmatrix}
= \mathbb{A}^{0}_{k} \mathbb{U} \mathbb{U}^{\dag} (\mathbb{A}^{0}_{k})^{\dag} + \mathbb{A}^{1}_{k} \mathbb{U} \mathbb{U}^{\dag} (\mathbb{A}^{1}_{k})^{\dag} = \sum_s \mathbb{A}^{s}_{k} (\mathbb{A}^{s}_{k})^{\dag} = \mathbb{1} \, ,
\end{equation}
where we used the right-normalization of the MPS tensors. Similarly $Q^{\dag} Q = \mathbb{1}$. Since the non zero singular eigenvalues of $P$ ($Q$) are the square roots of the eigenvalues of $P P^{\dag}$ ($Q^{\dag} Q$), we have $|| P ||_{\infty} = || Q ||_{\infty} = 1$. Thus, we can write 
\begin{equation}
     ||E_k[\tilde{\mathbb{R}}_{k}]||_{\infty} =  || P \big( \sigma^{*}_k \otimes \Lambda \big) Q ||_{\infty} \leq || P ||_{\infty} ||\sigma^{*}_k \otimes \Lambda ||_{\infty} || Q ||_{\infty} = ||\sigma^{*}_k||_{\infty} ||\Lambda||_{\infty} \, .
\end{equation}
Ultimately, we get Eq.~\ref{eq:utile} by noticing that $||\tilde{\mathbb{R}}_{k}||_{\infty} = ||\Lambda||_{\infty}$ and $||\sigma^{*}_k||_{\infty} = 1$ (for every Pauli matrix $\sigma$). 
Finally, considering $\mathbb{R}_{i}$ (which is a matrix of size $\chi_i \times \chi_i$) we obtain
\begin{equation}
     ||\mathbb{R}_{i}||_{2}^2 \leq \chi_i ||\mathbb{R}_{i}||_{\infty}^2 = \chi_i \lVert \, 
 \frac{1}{2^{(N-i)/2}} E_{i+1}[E_{i+2}[E_{i+3}[... E_N[1]]]] \, \rVert_{\infty}^2 \leq \frac{\chi_i}{2^{(N-i)}} \, ,
\end{equation}
where we repeatedly used the inequality in Eq.~\ref{eq:utile}, until we reached the boundary. 
\end{proof}

\end{appendices}

\end{document}